\newtheorem{theorem}{Theorem}
\newtheorem{corollary}{Corollary}[theorem]
\newtheorem*{remark}{Remark}
\title{Controlling epidemic diseases based only on social distancing level}
\author{\includegraphics[scale=0.06]{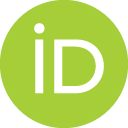}\hspace{1mm}Samaherni M. Dias \\
	\texttt{sama@laci.ufrn.br} \\
	\And
	\includegraphics[scale=0.06]{orcid.png}\hspace{1mm}Kurios I. P. de M. Queiroz \\
	\texttt{kurios@laci.ufrn.br} \\	
	\And
	\includegraphics[scale=0.06]{orcid.png}\hspace{1mm}Allan de M. Martins \\
	\texttt{allan@laci.ufrn.br} \\
	\And
	Laboratory of Automation, Control, and Instrumentation (LACI)\\
	Department of Electrical Engineering\\
	Federal University of Rio Grande do Norte (UFRN)\\
	Natal-RN, Brasil \\
}
\date{}
\begin{document}
\maketitle

\begin{abstract}
The World Health Organization (WHO) made the assessment that COVID-19 can be characterized as a pandemic on March 11, 2020. To the COVID-19 outbreak, there is no vaccination and no treatment. The only way to control the COVID-19 outbreak is sustained physical distancing. In this work, a simple control law was proposed to keep the infected individuals during the COVID-19 outbreak below the desired number. The proposed control law keeps the value of infected individuals controlled only adjusting the social distancing level. The stability analysis of the proposed control law is done and the uncertainties in the parameters were considered. A version of the proposed controller to daily update was developed. This is a very simple approach to the developed control law and can be calculated in a spreadsheet. In the end, numerical simulations were done to show the behavior of the number of infected individuals during an epidemic disease when the proposed control law is used to adjust the social distancing level.
\end{abstract}

\keywords{COVI-19, epidemic disease, SIR model, controller.}

\section{Introduction}

As of March 11, 2020, the World Health Organization (WHO) made the assessment that COVID-19 can be characterized as a pandemic. This assessment changed the attention of the researchers to the novel severe acute respiratory syndrome coronavirus 2 (SARS-CoV-2). 

In a realistic scenario of an epidemic disease, the available social and medical resources to treat diseases or to prevent their spreading are usually limited \cite{Jiang_2018}. The COVID-19 pandemic has the basic reproduction number relatively high and presents worrying hospitalization and death rates \cite{Verity_2020}. 

Measures for prevention and control of infectious diseases include vaccination, treatment, quarantine, isolation, and prophylaxis. Quarantine and isolation are two measures by which exposed or infectious individuals are removed from the population to prevent further spread of the infection. Quarantine is applied less often. It is one of the first response methods that can be used in an extreme emergency. Quarantine was implemented during the SARS epidemic of 2002–2003\cite{martcheva_2015}. 

In the literature there are several works considering optimal control applied to epidemic diseases \cite{martcheva_2015,Ball_2008,Gaff_2009,Zaman_2009}. Optimal control theory provides a valuable tool to begin to assess the trade-offs between vaccination and treatment strategies \cite{Gaff_2009}. However, for the recent COVID-19 outbreak there is no vaccination and treatment. The only possible to control this outbreak is sustained physical distancing.

In the work of Kiesha Prem et al. \cite{Prem_2020}, the authors conclude that non-pharmaceutical interventions based on sustained physical distancing have a strong potential to reduce the magnitude of the epidemic peak of COVID-19 and lead to a smaller number of overall cases. In this work, the authors used an age-group model because the social mixing patterns vary across locations, including households, workplaces, schools, and other locations. 

In the work of Joel Hellewell et al. \cite{Hellewell_2020}, the authors conclude that in most scenarios, highly effective contact tracing and case isolation is enough to control a new outbreak of COVID-19.

Cameron Nowzari Et al. \cite{Nowzari_2016} had a survey about the analysis and control of epidemics. In that work, the authors highlight as main research challenges: ``all control methods discussed so far have been for deterministic models"; ``all control methods discussed so far have admitted centralized solutions"; ``all control methods discussed so far have assumed there are no uncertainties"; ``more general epidemic models are needed". 

Based on the challenges shown and the characteristics of the COVID-19 outbreak, the focus of this work is to propose a control law to control outbreaks of the epidemic disease there is no vaccination and treatment. The control law proposed here is based on the adjustment of the social distancing level. In addition, a more general epidemic model will be presented.      

The paper is structured as follows. In Section 2, the epidemic models are introduced. In Section 3, a controller law, considering the uncertainties of the process, will be presented to a classical model of the epidemic disease. Section 4 contains some numerical simulations of the theory, and Section 5 provides some concluding comments. 

\section{SIR model}

The dynamics of epidemic diseases can be described by Susceptible-Infectious-Recovered (SIR) model. The SIR model was formulated por Kermack and McKendrick in 1927\cite{Kermack_1927}. The SIR model applies to epidemics having a relatively short duration (few months). The models presented in this work is to epidemic diseases with no treatment and vaccination.

The SIR model is a system of three differential equations:
\begin{equation}
\begin{array}{rcl}
\dfrac{d~s(t)}{dt}&=&-\beta(t) i(t)s(t)  \\[1em]
\dfrac{d~i(t)}{dt}&=&\beta(t) i(t)s(t) - \gamma i(t)\\[1em]
\dfrac{d~r(t)}{dt}&=&\gamma i(t)
\end{array}\label{eq:SIR}
\end{equation}where $s(t)$ is the number of susceptible individuals, $i(t)$ is the number of infected individuals, $r(t)$ is the number of recovered, $\beta(t)$ is the proportional coefficient of the disease transmission rate and $\gamma$ is the recovery rate (the specific rate at which infected individuals recover from the disease). Let $N(t)=s(t)+i(t)+r(t)$ the number of individuals in the population and considering the population constant 
\begin{equation}
\begin{array}{rcl}
\dfrac{d~N(t)}{dt}&=&\dfrac{d~s(t)}{dt}+\dfrac{d~i(t)}{dt}+\dfrac{d~r(t)}{dt}=0
\end{array}\label{eq:N}
\end{equation}

The proportional coefficient of the disease transmission rate is obtained by 
\begin{equation}
\begin{array}{rcl}
\beta(t)=\dfrac{\kappa(t)\tau(t)}{N(t)}
\end{array}\label{eq:beta}
\end{equation}where $\kappa(t)$ is proportional to the numbers of contacts that an infected individual has per unit time and $\tau(t)$ is the probability that a contact with a susceptible individual results in transmission \cite{martcheva_2015}.

The the basic reproductive number is given by 
\begin{equation}
\begin{array}{rcl}
R_0=\dfrac{\beta_0}{\gamma}=\dfrac{\kappa_0\tau_0}{N\gamma}
\end{array}\label{eq:R0}
\end{equation}where $\kappa_{0}$ represents the mean value of the numbers of contacts that an infected individual has per unit time in normal conditions, $\tau_0$ represents the mean value of the probability that a contact with a susceptible individual results in transmission and $\beta_0$ is the proportional coefficient of the disease transmission rate when calculated by $\kappa_{0}$ and the number of individuals in the population ($N$) is considered constant. 

The fundamental characteristics of the model (\ref{eq:SIR}) are:
\begin{enumerate}
\item[\textbf{C1.}] the number of susceptible individuals is greater than zero ($s(t)>0$);
\item[\textbf{C2.}] the number of infected individuals is greater than zero ($i(t)>0$);
\item[\textbf{C3.}] the value of the numbers of contacts that an infected individual has per unit time is greater than zero ($\kappa(t)>0$);
\item[\textbf{C4.}] the probability that a contact with a susceptible individual results in transmission is greater than zero ($\tau(t)>0$);
\item[\textbf{C5.}] the recovery rate is greater than zero ($\gamma>0$);
\end{enumerate}

\section{Proposed Controller}

The main aim of the proposed controller is to keep the number of hospitalized infected individuals bellow of the capacity of the health care system in a specific region. Social distancing will be used to control the number of hospitalized individuals. The controller proposed here will calculate the exact social distancing rate to keep the number of hospitalized individuals in the desired value. The proposed controller can be applied to help in the reopening of a region.

First it is necessary consider 
\begin{equation}
\kappa(t)=\rho(t)\kappa_0\label{eq:kappa}
\end{equation}where $\rho(t)$ is the control signal and it will modulate the mean value of the numbers of contacts that an infected individual has per unit time in normal conditions ($\kappa_{0}$).  

The purpose is to find a control law $\rho(t)$, to the system (\ref{eq:SIR}), such that the output error
\begin{equation}
e(t)=i_d-i(t),\label{eq:erro}
\end{equation}where $i_d$ indicates the maximum desired infected individuals, tends to zero when $t\rightarrow\infty$, which guarantees that the health care system is not going to collapse.

The following assumptions are made:
\begin{enumerate}
\item[\textbf{A1.}] the maximum desired infected individuals is assumed to be a step and greater than zero ($i_d>0$);
\item[\textbf{A2.}] the number of individuals (N(t)) in the population is assumed to be constant ($N$);
\item[\textbf{A3.}] the probability that contact with a susceptible individual will result in the transmission is assumed to be constant ($\tau_0$);
\end{enumerate}

Considering the assumptions (A2-A3), the proportional coefficient of the disease transmission rate is updated to
\begin{equation}
\beta(t)=\dfrac{\rho(t)\kappa_0\tau_0}{N}=\rho(t)\beta_0\label{eq:beta_t_rho}
\end{equation}

Let us consider the control law 
\begin{equation}
\begin{array}{rcl}
\rho(t)=\dfrac{\psi(t) N}{ i(t)s(t)\kappa_{0}\tau_0}
\end{array}\label{eq:rho}
\end{equation}where
\begin{equation}
\psi(t)=\psi_1 e(t) + \psi_2 \displaystyle\int_{0}^{t}e(t) dt\label{eq:psi}
\end{equation}and $\psi_1$, $\psi_2$ are nonnegative constants chosen to adjust the error dynamics. 

\begin{theorem}
Consider the system of the equation (\ref{eq:SIR}), the error equation (\ref{eq:erro}) and the control law (\ref{eq:rho}). Whenever all assumptions (A1)-(A3) are satisfied, the error $e(t)$ will converge to zero when time tends to infinity.
\end{theorem}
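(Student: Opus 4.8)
The plan is to exploit the algebraic structure of the control law (\ref{eq:rho}), which is deliberately tailored so that the nonlinear transmission term in the infected‑population equation is exactly cancelled; in effect this is a feedback‑linearizing choice of $\rho(t)$. First I would substitute (\ref{eq:beta_t_rho}) and (\ref{eq:rho}) into the second equation of (\ref{eq:SIR}). Since by construction $\beta(t)i(t)s(t)=\rho(t)\beta_0\, i(t)s(t)=\psi(t)$, the closed‑loop dynamics of the infected compartment collapse to the scalar linear equation
\[
\dot i(t)=\psi(t)-\gamma\, i(t).
\]
This reduction is the heart of the argument; everything after it is linear systems theory.

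Next I would pass to the error coordinate. Because $i_d$ is constant by (A1), differentiating (\ref{eq:erro}) gives $\dot e(t)=-\dot i(t)=\gamma i_d-\gamma e(t)-\psi(t)$. Introducing the integrator state $x(t)=\int_{0}^{t}e(\sigma)\,d\sigma$, so that $\dot x=e$, and using $\psi=\psi_1 e+\psi_2 x$ from (\ref{eq:psi}), the pair $(e,x)$ obeys a linear time‑invariant system with finite initial data $e(0)=i_d-i(0)$, $x(0)=0$. Differentiating $\dot e$ once more and substituting $\dot x=e$ eliminates $x$ and leaves the second‑order constant‑coefficient equation
\[
\ddot e(t)+(\psi_1+\gamma)\,\dot e(t)+\psi_2\, e(t)=0.
\]
I would then invoke the Routh--Hurwitz criterion: the characteristic polynomial $\lambda^{2}+(\psi_1+\gamma)\lambda+\psi_2$ has both roots in the open left half‑plane as soon as $\psi_1+\gamma>0$ (automatic, since $\gamma>0$) and $\psi_2>0$, whence $e(t)\to 0$, in fact exponentially, as $t\to\infty$. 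I would flag that strict positivity of $\psi_2$ is genuinely required here: with $\psi_2=0$ one only obtains $e(t)\to \gamma i_d/(\gamma+\psi_1)\neq 0$, so the hypothesis on $\psi_2$ must be read as positive rather than merely nonnegative.

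The step I expect to cost the most care is not this convergence estimate but the well‑posedness of the feedback loop along the whole trajectory. The control law (\ref{eq:rho}) divides by $i(t)s(t)$, so one must argue, using C1--C2, that $s(t)>0$ and $i(t)>0$ persist for all $t\ge 0$; this follows from the multiplicative form of the $s$‑ and $i$‑equations, each solution being $s(0)$, respectively $i(0)$, times the exponential of an integral that is finite on every bounded time interval. One should likewise check that $\rho(t)$ stays nonnegative so that C3 is respected and $\beta(t)=\rho(t)\beta_0$ remains a legitimate transmission coefficient, and that $\int_{0}^{t}e\,d\sigma$ is finite on finite intervals — immediate once local existence of solutions is established. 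Together these points justify that the reduction to the linear ODE above is valid globally in time, which is what makes the conclusion $e(t)\to 0$ rigorous rather than merely formal.
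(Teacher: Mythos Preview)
Your argument is correct. Both you and the paper first reduce the closed loop to the same linear second–order equation $\ddot e+(\psi_1+\gamma)\dot e+\psi_2 e=0$ via the feedback–linearizing cancellation $\beta(t)i(t)s(t)=\psi(t)$, but then diverge in how stability is concluded. The paper argues by a Lyapunov/LaSalle route: it takes $V(\dot e,e)=\tfrac{1}{2}\dot e^{2}+\tfrac{1}{2}\psi_2 e^{2}$, obtains $\dot V=-(\psi_1+\gamma)\dot e^{2}\le 0$, and invokes LaSalle's invariance principle on the set $\{\dot e=0\}$. Your Routh--Hurwitz argument is more elementary (no invariance principle needed) and yields the sharper conclusion of exponential convergence together with explicit dependence on $\psi_1,\psi_2,\gamma$; the Lyapunov route, on the other hand, is the template the paper reuses verbatim for the uncertain case in Corollary~1.1. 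Your observation that $\psi_2>0$ is genuinely required---with $\psi_2=0$ the steady state is $e=\gamma i_d/(\gamma+\psi_1)\neq 0$---is a point the paper's phrasing (``nonnegative'') obscures, and the paper's LaSalle step $-\psi_2 e\equiv 0\Rightarrow e\equiv 0$ likewise tacitly needs $\psi_2\neq 0$. Finally, the well-posedness discussion you add (positivity of $i,s$ so that the division in (\ref{eq:rho}) is legitimate) is not addressed in the paper's proof; note, however, that nonnegativity of $\rho(t)$ is \emph{not} guaranteed by (\ref{eq:rho}) alone---the paper handles that separately via the saturated law (\ref{eq:rho_c2})---so for Theorem~1 this should be read as a caveat on physical interpretability rather than a step in the convergence proof.
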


\begin{proof}
Consider the Lyapunov function
\begin{equation}
V(\dot{e},e)=\dfrac{\dot{e}^{2}}{2}+\dfrac{\psi_2 e^{2}}{2}\label{eq:V}
\end{equation}

Then, the time derivative of $V(\dot{e},e)$ will be
\begin{equation}
\dot{V}(\dot{e},e)=\ddot{e}\dot{e}+\psi_2\dot{e}e\label{eq:dV}
\end{equation}

Applying (\ref{eq:SIR}), (\ref{eq:erro}) and (\ref{eq:rho}) in (\ref{eq:dV}), the result is
\begin{equation}
\dot{V}(\dot{e},e)=-\psi_1\dot{e}^{2}-\gamma\dot{e}^{2}\leq 0. \label{eq:dV_final}
\end{equation}

For the system to maintain $\dot{V}(\dot{e},e)=0$ condition, the trajectory must be confined to the line $\dot{e}=0$. Using the system dynamics (\ref{eq:erro}) yields: 
\begin{equation}
\dot{e}\equiv 0\Rightarrow \ddot{e}\equiv 0 \Rightarrow - \psi_2 e \equiv 0 \Rightarrow  e \equiv 0\nonumber
\end{equation}which by LaSalle's theorem the origin is globally asymptotically stable (GAS).  
\end{proof}

Now, consider $\hat{\kappa}_{0}\hat{\tau}_0$ and $\hat{\gamma}$ the known values to the parameters $\kappa_{0}\tau_0$ and $\gamma$, respectively, where $\hat{\kappa}_{0}\hat{\tau}_0=\Delta_{\kappa}\kappa_{0}\tau_0$ and $\hat{\gamma}=\Delta_{\gamma}\gamma$. The  $\Delta_{\kappa}$ and $\Delta_{\gamma}$ represent the uncertainties in the system and it can assume any value between 0.5 and 1.5. The control law will update to 
\begin{equation}
\begin{array}{rcl}
\rho(t)=\dfrac{\psi(t) N}{ i(t)s(t)\hat{\kappa}_{0}\hat{\tau}_0}
\end{array}\label{eq:rho_c1}
\end{equation}

\begin{corollary}
Consider the system of the equation (\ref{eq:SIR}), the error equation (\ref{eq:erro}), the control law (\ref{eq:rho_c1}) and that there are uncertainties in some of the parameters $\kappa_{0}\tau_0$, $\gamma$. Whenever all assumptions (A1)-(A3) are satisfied, the error $e(t)$ will converge to zero when time tends to infinity.
\end{corollary}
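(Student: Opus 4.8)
The plan is to mirror the proof of Theorem 1, the only genuine change being that the uncertain factor $\Delta_\kappa$ survives the cancellation in the infected–compartment equation and therefore has to be absorbed into the Lyapunov function. First I would substitute the uncertain control law (\ref{eq:rho_c1}) into the second equation of (\ref{eq:SIR}), using the \emph{true} transmission coefficient $\beta(t)=\rho(t)\kappa_0\tau_0/N$ from (\ref{eq:beta_t_rho}) together with $\hat\kappa_0\hat\tau_0=\Delta_\kappa\kappa_0\tau_0$. The factors $i(t)s(t)$ and $N$ cancel exactly as in the nominal case, but a residual factor $1/\Delta_\kappa$ is left over, yielding $\dot i(t)=\psi(t)/\Delta_\kappa-\gamma i(t)$. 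I would also remark at this point that $\gamma$ — and hence its estimate $\hat\gamma$ — never appears in the implemented controller (\ref{eq:rho_c1}), so the uncertainty $\Delta_\gamma$ plays no role in this corollary; only $\Delta_\kappa$ matters.

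Next, using $e(t)=i_d-i(t)$ together with (\ref{eq:psi}), and differentiating once while invoking A1 (so that $i_d$ is a constant step), I obtain the closed-loop error dynamics $\ddot e+\bigl(\psi_1/\Delta_\kappa+\gamma\bigr)\dot e+\bigl(\psi_2/\Delta_\kappa\bigr)e=0$, which reduces to the dynamics in the proof of Theorem 1 when $\Delta_\kappa=1$. I would then take the rescaled Lyapunov function $V(\dot e,e)=\dot e^{2}/2+\bigl(\psi_2/(2\Delta_\kappa)\bigr)e^{2}$; it is positive definite and radially unbounded because $\psi_2>0$ and, crucially, $\Delta_\kappa>0$, which is guaranteed by the assumed range $\Delta_\kappa\in[0.5,1.5]$. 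Differentiating along the closed-loop trajectories gives $\dot V=\dot e\bigl(\ddot e+(\psi_2/\Delta_\kappa)e\bigr)=-\bigl(\psi_1/\Delta_\kappa+\gamma\bigr)\dot e^{2}\le 0$, using again $\psi_1\ge 0$, $\Delta_\kappa>0$ and $\gamma>0$ (condition C5).

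I would close the argument with LaSalle's invariance principle exactly as in Theorem 1: on the set $\{\dot V=0\}=\{\dot e=0\}$ the error dynamics force $\ddot e=0$, hence $(\psi_2/\Delta_\kappa)e=0$, hence $e=0$; since the sublevel sets of $V$ are compact, the largest invariant set contained in $\{\dot V=0\}$ is the origin, which is therefore globally asymptotically stable, so $e(t)\to 0$ as $t\to\infty$.

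The main obstacle — indeed the only place where the uncertainty actually bites — is verifying that the Lyapunov construction survives the parametric perturbation: one must confirm that $\Delta_\kappa$ is a \emph{positive} constant so that (i) the rescaled quadratic form $V$ is still positive definite and radially unbounded, and (ii) the cross terms still cancel, leaving a sign-definite $\dot V$. The hypothesis $\Delta_\kappa\in[0.5,1.5]$ supplies precisely this. It is worth stating explicitly that no sign condition on $\Delta_\gamma$ is required, because $\hat\gamma$ does not enter (\ref{eq:rho_c1}); and one should note that conditions (C1)–(C4) keep $\rho(t)$ in (\ref{eq:rho_c1}) well defined along the trajectory, which is what makes the substitution in the first step legitimate.
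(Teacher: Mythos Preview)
Your proof is correct and follows essentially the same approach as the paper: the same rescaled Lyapunov function $V=\dot e^{2}/2+(\psi_2/(2\Delta_\kappa))e^{2}$ followed by LaSalle's invariance principle. Your derivative $\dot V=-\bigl(\psi_1/\Delta_\kappa+\gamma\bigr)\dot e^{2}$ is in fact the correct one (the paper writes the second term as $-\gamma\dot e^{2}/\Delta_\kappa$, which appears to be a slip), and your added remarks that $\Delta_\gamma$ is immaterial because $\hat\gamma$ never enters (\ref{eq:rho_c1}), and that (C1)--(C4) keep $\rho(t)$ well defined, are welcome clarifications.
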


\begin{proof}
Consider the Lyapunov function
\begin{equation}
V(\dot{e},e)=\dfrac{\dot{e}^{2}}{2}+\dfrac{\psi_2 e^{2}}{2\Delta_{\kappa}}
\end{equation}

Then, the time derivative of $V(\dot{e},e)$ will be
\begin{equation}
\dot{V}(\dot{e},e)=-\dfrac{\psi_1\dot{e}^{2}}{\Delta_{\kappa}}-\dfrac{\gamma\dot{e}^{2}}{\Delta_{\kappa}}\leq 0.
\end{equation}

For the system to maintain $\dot{V}(\dot{e},e)=0$ condition, the trajectory must be confined to the line $\dot{e}=0$. Using the system dynamics (\ref{eq:erro}) yields: 
\begin{equation}
\dot{e}\equiv 0\Rightarrow \ddot{e}\equiv 0 \Rightarrow - \dfrac{\psi_2}{\Delta_{\kappa}}e  \equiv 0 \Rightarrow  e \equiv 0\nonumber
\end{equation}which by LaSalle's theorem the origin is globally asymptotically stable (GAS).
\end{proof}

To the real epidemic disease without treatment, the control law will update to
\begin{equation}
\rho(t)=\max\left(0,\min\left(1,\dfrac{\psi(t) N}{ i(t)s(t)\kappa_{0}\tau_0}\right)\right).\label{eq:rho_c2}
\end{equation}

In other words, $\rho(t)$ can take any value between 0 and 1. $\rho(t)<0$ occurs when $i(t)>>i_d$ and indicates that is necessary to increase the recovery rate, which is possible only with treatment. $\rho(t)>1$ occurs when $i(t)<<i_d$ and indicates that is necessary to increase the numbers of contacts that an infected individual has per unit time, however, this will not be done because the aim of the control law, when applied to the real epidemic disease, is guarantee $e(t)>0$ ($i(t)<i_d$). 

\begin{corollary}
Consider the system of the equation (\ref{eq:SIR}), the error equation (\ref{eq:erro}) and the control law (\ref{eq:rho_c2}). Whenever all assumptions (A1)-(A3) are satisfied, the error $e(t)$ will converge to a value greater than zero when time tends to infinity.
\end{corollary}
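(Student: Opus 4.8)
The plan is to mirror the Lyapunov/LaSalle argument of Theorem 1 and Corollary 1, but now track the effect of the saturation introduced in \eqref{eq:rho_c2}. First I would split the state space into the \emph{unsaturated region}, where $0<\dfrac{\psi(t)N}{i(t)s(t)\kappa_0\tau_0}<1$ and hence $\rho(t)$ equals the original control law \eqref{eq:rho}, and the two \emph{saturated regions} $\rho(t)=0$ (active when $i(t)\gg i_d$, so $e(t)<0$) and $\rho(t)=1$ (active when $i(t)\ll i_d$, so $e(t)>0$). Inside the unsaturated region the computation of $\dot V$ is identical to \eqref{eq:dV_final}, giving $\dot V=-(\psi_1+\gamma)\dot e^2\le 0$, so the candidate $V(\dot e,e)=\dot e^2/2+\psi_2 e^2/2$ is still available there.

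The key step is to argue that the saturated regions cannot trap the trajectory away from the target set $\{e\ge 0\}$. In the region $\rho=0$ we have $\dot s=0$ and $\dot i=-\gamma i<0$, so $i(t)$ is strictly decreasing while saturation persists; hence $e(t)=i_d-i(t)$ is strictly increasing, and the trajectory must exit this region in finite time through the boundary $e=0$ (or asymptote to it from below). In the region $\rho=1$ the dynamics are those of the uncontrolled SIR model; here I would use the fact that $s(t)$ is nonincreasing and that once $i(t)$ has fallen below $i_d$ the integral term in $\psi(t)$ keeps accumulating positive error, which eventually pushes $\dfrac{\psi(t)N}{i(t)s(t)\kappa_0\tau_0}$ back below $1$, returning the system to the unsaturated regime. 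Combining these, the only persistent behaviour is either convergence to $e=0$ (via the LaSalle argument already established for the unsaturated dynamics, exactly as in Theorem 1) or settling in the interior of $\{e>0\}$ when the epidemic has burnt out ($s(t)$ has dropped so low that even $\rho=1$ cannot sustain $i(t)=i_d$); in both cases $\lim_{t\to\infty}e(t)\ge 0$, and since equality would require re-entering and staying in the unsaturated set, the generic limit is a strictly positive value. I would then invoke LaSalle's invariance principle on the unsaturated region together with the finite-time exit property of the $\rho=0$ region to conclude $e(t)\to c$ for some $c\ge 0$, with $c>0$ precisely when the susceptible pool is exhausted below the level $\gamma N/(\kappa_0\tau_0)$ needed to maintain $i_d$ infections.

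The main obstacle is making the ``cannot get stuck below $e=0$'' claim rigorous: near the switching surface $\rho$ is only Lipschitz (in fact piecewise-defined), so one must either appeal to Filippov solutions or, more cheaply, observe directly that on $\{e<0\}$ the closed loop satisfies $\dot e = -\dot\imath = \gamma i - \beta i s \ge \gamma i - \beta_0 i s$, and when $\rho$ is saturated at $0$ this is simply $\dot e=\gamma i>0$, a uniform lower bound on $\{i\ge i_d\}$ of the form $\dot e\ge \gamma i_d>0$, which forces exit in time at most $(i(0)-i_d)/(\gamma i_d)$. Once that quantitative escape estimate is in hand, the remainder is a routine combination of the earlier Lyapunov argument with a monotonicity/boundedness observation on $s(t)$, and the stated conclusion $e(t)\to$ (some value $>0$) follows.
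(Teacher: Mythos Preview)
Your approach is genuinely different from the paper's and considerably more involved. The paper does not split into saturated/unsaturated regions at all, nor does it reuse the Lyapunov function $V(\dot e,e)=\dot e^{2}/2+\psi_2 e^{2}/2$ from Theorem~1. Instead it observes that, whatever value the saturated $\rho(t)\in[0,1]$ takes, the closed loop satisfies
\[
\dot s+\dot i=-\gamma i,
\]
so $V(s,i)=s+i$ is a global Lyapunov function with $\dot V=-\gamma i\le 0$. From this (together with the standard SIR monotonicity of $s$ and $r$) one gets $i(t)\to 0$, hence $e(t)=i_d-i(t)\to i_d>0$. The whole argument is three lines and is completely insensitive to where the saturation boundaries lie.

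Your route, by contrast, has a real gap. In the unsaturated region your LaSalle argument drives $e\to 0$, and you explicitly allow this as one of the two ``persistent behaviours''. But $e\to 0$ is \emph{not} a value greater than zero, so at that point you have not proved the corollary; you have only shown $\liminf e\ge 0$. To close the gap you would still need to argue that the unsaturated regime cannot persist for all time, i.e.\ that maintaining $i(t)\equiv i_d$ forces $s(t)$ to decrease until $\rho$ saturates at $1$ and $i$ is subsequently driven to $0$. That is exactly what the paper's $V=s+i$ argument delivers in one stroke: since $\int_0^\infty \gamma i(t)\,dt\le s(0)+i(0)<\infty$, the infected population cannot stay near $i_d>0$ indefinitely. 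Your treatment of the $\rho=1$ region is also heuristic (``the integral term keeps accumulating\ldots eventually pushes\ldots back below $1$''); note in addition the Remark following the corollary, which freezes the integral during saturation, so that mechanism is not even available.

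In short: your saturation case analysis is unnecessary and, as written, does not reach the stated conclusion. The key idea you are missing is the elementary observation that $s+i$ is monotone with derivative $-\gamma i$, which forces $i(t)\to 0$ and hence $e(t)\to i_d>0$ irrespective of how $\rho(t)$ switches.
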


\begin{proof}
Consider the system (\ref{eq:SIR}) when the control law (\ref{eq:rho_c2}) is applied 
\begin{equation}
\begin{array}{rcl}
\dfrac{d~s(t)}{dt}&=&-\rho(t)\beta_{0} i(t)s(t)  \\[1em]
\dfrac{d~i(t)}{dt}&=&\rho(t)\beta_{0} i(t)s(t) - \gamma i(t)\\[1em]
\end{array}\nonumber
\end{equation}

Based on the characteristics (C1-C5) of the system (\ref{eq:SIR}), consider the Lyapunov function
\begin{equation}
V(s,i)=s + i\nonumber
\end{equation}

Then, the time derivative of $V(s,i)$ will be
\begin{equation}
\dot{V}(s,i)=- \gamma i\nonumber
\end{equation}

Therefore $\dot{V}(s,i)\leq0$ which implies that $i(t)\rightarrow 0$ and $e(t) \rightarrow i_{d}$ when $t \rightarrow \infty$.
\end{proof}

\begin{remark}
During the period of control law saturation, the value of the integral term of the equation (\ref{eq:psi}) is not computed.  
\end{remark}

\subsection{Proposed controller using estimates of the $i(t)$}

To real epidemic disease, the number of confirmed cases generally is the most reliable measurement. Based on this, the control law (\ref{eq:rho}) will be adjusted to be applied in the real case. The control law will change to 
\begin{equation}
\rho(t)=\dfrac{\psi(t) N}{\hat{i}(t)s(t)\kappa_{0}\tau}=\dfrac{\psi(t)}{\hat{i}(t)s(t)\beta_{0}}\label{eq:rho_real}
\end{equation}where $\hat{i}(t)$ is the estimated value to $i(t)$. 

To apply the proposed control law will be necessary to know $N$, $\beta_{0}$, to measure $s(t)$ and to estimate $i(t)$. Generally the values of $N$ and $\beta_{0}$ are known ($\beta_{0}$ it is obtained from $R_0$). The value $s(t)$ is measured based on the number of confirmed cases $c(t)$ and $\hat{i}(t)$ is based on $s(t)$ and $\beta_{0}$.

To epidemic disease, based on standard SIR model (\ref{eq:SIR}), where all individual of the population is susceptible, the dynamic of the number of confirmed cases is given by 
\begin{equation}
\dfrac{d~c(t)}{dt}=\beta(t) i(t)s(t)\label{eq:c_real}
\end{equation}

Based on the equation (\ref{eq:c_real}), it is possible to consider
\begin{equation}
s(t)=N-c(t)\label{eq:s_real}
\end{equation}

The estimate of the $i(t)$ is given by 
\begin{equation}
\hat{i}(t)=\dfrac{d~s(t)}{dt}\cdot(\beta(t)s(t))^{-1}=\dfrac{d~s(t)}{dt}\cdot(\rho(t)\beta_0 s(t))^{-1}\label{eq:hi_real}
\end{equation}

\subsubsection{Proposed controller with daily update}

It is important to notice that is easy to apply the proposed controller to real epidemic disease with daily update. First, consider $d$ as the day of the epidemic disease. To calculate 
\begin{equation}
\rho(d)=\dfrac{\psi_1 e(d)+\psi_2 \displaystyle\sum_{k=1}^{d}e(k)}{\hat{i}(d)s(d)\beta_{0}}\label{eq:rho_d}
\end{equation}it is necessary to know the basic reproductive number ($R_0$) to obtain $\beta_0$, it is necessary to obtain 
\begin{equation}
s(d)=N-c(d)\label{eq:sd_real}
\end{equation}where $c(d)$ is the number of confirmed cases at day $d$, it is necessary to calculate 
\begin{equation}
\hat{i}(d)=\dfrac{s(d)-s(d-1)}{\rho(d-1)\beta_0 s(d-1)},\label{eq:hi_d_real}
\end{equation}
\begin{equation}
e(d)=i_d(d)-\hat{i}(d),\label{eq:erro_d}
\end{equation}to define the gains $\psi_1$, $\psi_2$ and to define the maximum desired infected individuals ($i_d$). In the case where $i(d)$ is measured, the value of $\hat{i}(d)$, in the equations (\ref{eq:rho_d}) and (\ref{eq:erro_d}), has to be changed by $i(d)$. 

The implementation of the equation (\ref{eq:rho_d}) in the real outbreak is simple and can be done in a spreadsheet. Consider that the basic reproductive number $R_0$, the recovery rate $\gamma$, the number of population ($N$), the number of susceptible individuals ($s$), and the number of infected individuals ($i$) are known, the isolation level can be computed in a spreadsheet as follows (see Table (\ref{tb:001})).
\begin{table}[!ht]
	\caption{Example of how to use the proposed control law when s and i are measured}
	\centering
	\begin{tabular}{cccccc}
		\toprule
		Day   & Susceptible & Infected & Error  & Accumulated error & Isolation \\
		$d$   & $s(d)$ & $i(d)$ & $e(d)$ & $\sum e$ & 1-$\rho(d)$\\
		\midrule
0 & $s(0)$ & $i(0)$ & $i_d-i(0)$ & $e(0)$ & $1-\dfrac{\psi_1 e(0)+\psi_2 \sum e}{i(0)s(0)\beta_0}$ \\[1em]
1 & $s(1)$ & $i(1)$ & $i_d-i(1)$ & $e(0)+e(1)$ & $1-\dfrac{\psi_1 e(1)+\psi_2 \sum e}{i(1)s(1)\beta_0}$ \\[1em]
2 & $s(2)$ & $i(2)$ & $i_d-i(2)$ & $e(0)+e(1)+e(2)$ & $1-\dfrac{\psi_1 e(2)+\psi_2 \sum e}{i(2)s(2)\beta_0}$ \\
		\bottomrule
	\end{tabular}
	\label{tb:001}
\end{table}

Consider that the basic reproductive number $R_0$, the recovery rate $\gamma$, the number of population ($N$), and the number of confirmed infected individuals ($c$) are known, the isolation level can be computed in a spreadsheet as follows (see Table (\ref{tb:002})).
\begin{table}[!ht]
	\caption{Example of how to use the proposed control law when c is measured}
	\centering
	\begin{tabular}{ccccccc}
		\toprule
		Day   & confirmed & Susceptible & Infected & Error  & Accumulated error & Isolation \\
		$d$   & $c(d)$ & $s(d)$ & $\hat{i}(d)$ & $e(d)$ & $\sum e$ & 1-$\rho(d)$\\
		\midrule
0 & $c(0)$ & $N-c(0)$ & $\hat{i}(0)$ & $i_d-\hat{i}(0)$ & $e(0)$ & $1-\dfrac{\psi_1 e(0)+\psi_2 \sum e}{\hat{i}(0)s(0)\beta_0}$ \\[1em]
1 & $c(1)$ & $N-c(1)$ & $\dfrac{s(1)-s(0)}{\rho(0)\beta_{0}s(0)}$ & $i_d-\hat{i}(1)$ & $e(0)+e(1)$ & $1-\dfrac{\psi_1 e(1)+\psi_2 \sum e}{\hat{i}(1)s(1)\beta_0}$ \\[1em]
2 & $c(2)$ & $N-c(2)$ & $\dfrac{s(2)-s(1)}{\rho(1)\beta_{0}s(1)}$ & $i_d-\hat{i}(2)$ & $e(0)+e(1)+e(2)$ & $1-\dfrac{\psi_1 e(2)+\psi_2 \sum e}{\hat{i}(2)s(2)\beta_0}$ \\
		\bottomrule
	\end{tabular}
	\label{tb:002}
\end{table}

\section{Numerical simulations}

In this section, the proposed controller will be simulated and their results are presented and analyzed. The Figures (1-3) presents the result to the proposed controller applied to the standard SIR model. Another important consideration is that the numerical method for solving ordinary differential equations was the Euler method and the integration step was one day and that all simulations has 600 days.  

\subsection{Standard SIR model}

To the simulations of proposed controller (equation \ref{eq:rho}) applied to standard SIR model (equation \ref{eq:SIR}) it is necessary to define the number of individuals in the population $N$= 1 million, the basic reproductive number $R_0=2$ \cite{NEJMoa_2020,Prem_2020,Kucharski_2020}, the recovery rate $\gamma=0.2$ \cite{NEJMoa_2020,Prem_2020,Kucharski_2020}, the initial condition of the susceptible individuals $s(0)=N-1$, the initial condition of the infected individuals $i(0)=1$, the initial condition of the recovered individual $r(0)=0$, the number of the hospitalized individuals ($H$) is 10\% of the infected individuals, the desired number of hospitalized individuals ($H_d$) is 800 (this value is according with the capacity of the health care of the region) and the gains of the controller are $\psi_1=0.02$, $\psi_2=0.0043$.  

The simulation of Figure \ref{fg:501} shows the behavior of the infected individuals that will need health care when none controller was applied. In this simulation, it is possible to see that an epidemic disease without a controller can result in a huge number of hospitalized individuals and this occurs in a few days. Generally, the number of hospitalized individuals is greater than the capacity of the health care system of the region. To guarantee that the number of hospitalized individuals keep lower than the health care system of the region, the proposed controller will be applied.  
\begin{figure}[ht]
\center
\subfigure[$\{H,H\_d\}(\text{individuals}) \times t(\text{days})$]{\includegraphics[width=0.48\linewidth, height=!]{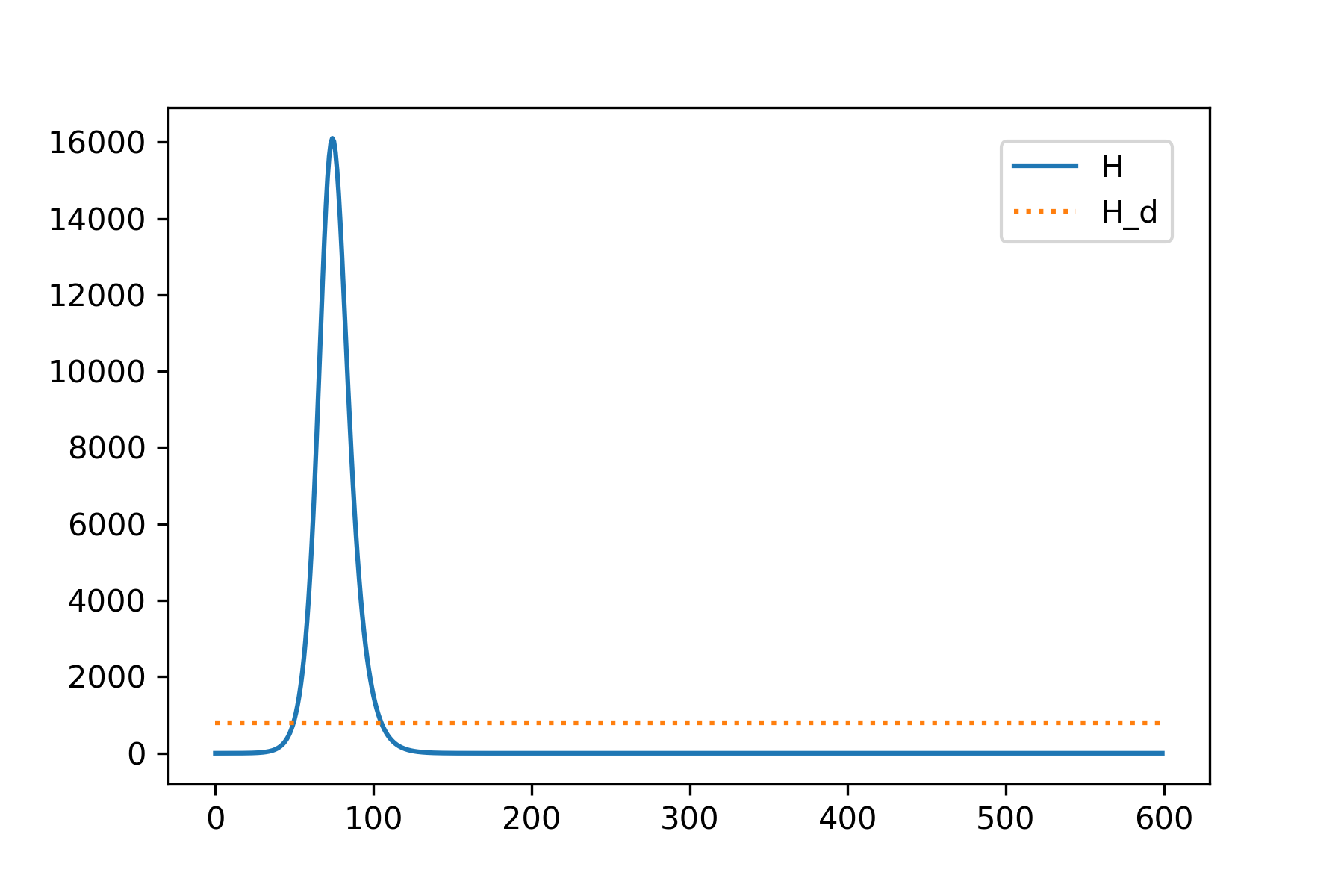}}
\caption{Simulation of standard SIR model without a controller, where blue line is the number of the hospitalized individuals $H$, orange dotted line is the desired number of hospitalized individuals $H_d$}\label{fg:501}
\end{figure}

The simulation of Figure \ref{fg:502} shows the behavior of the infected individuals that will need health care when the proposed controller is applied to adjust the social distancing level. In this simulation, it is possible to see that the number of hospitalized individuals keeps lower than the desired number of hospitalized individuals. However to do this, the number of days with any level of the social distancing is great and inversely proportional to the desired number of hospitalized individuals. In the beginning, the social distancing is near 50\% and shows a soft decay during 300 days. The uncertainties considered was +15\% and -20\% in the values of $\beta_{0}$ and $\gamma$, respectively.   
\begin{figure}[ht]
\center
\subfigure[$\{H,H\_u,H\_d\}(\text{individuals}) \times t(\text{days})$]{\includegraphics[width=0.48\linewidth, height=!]{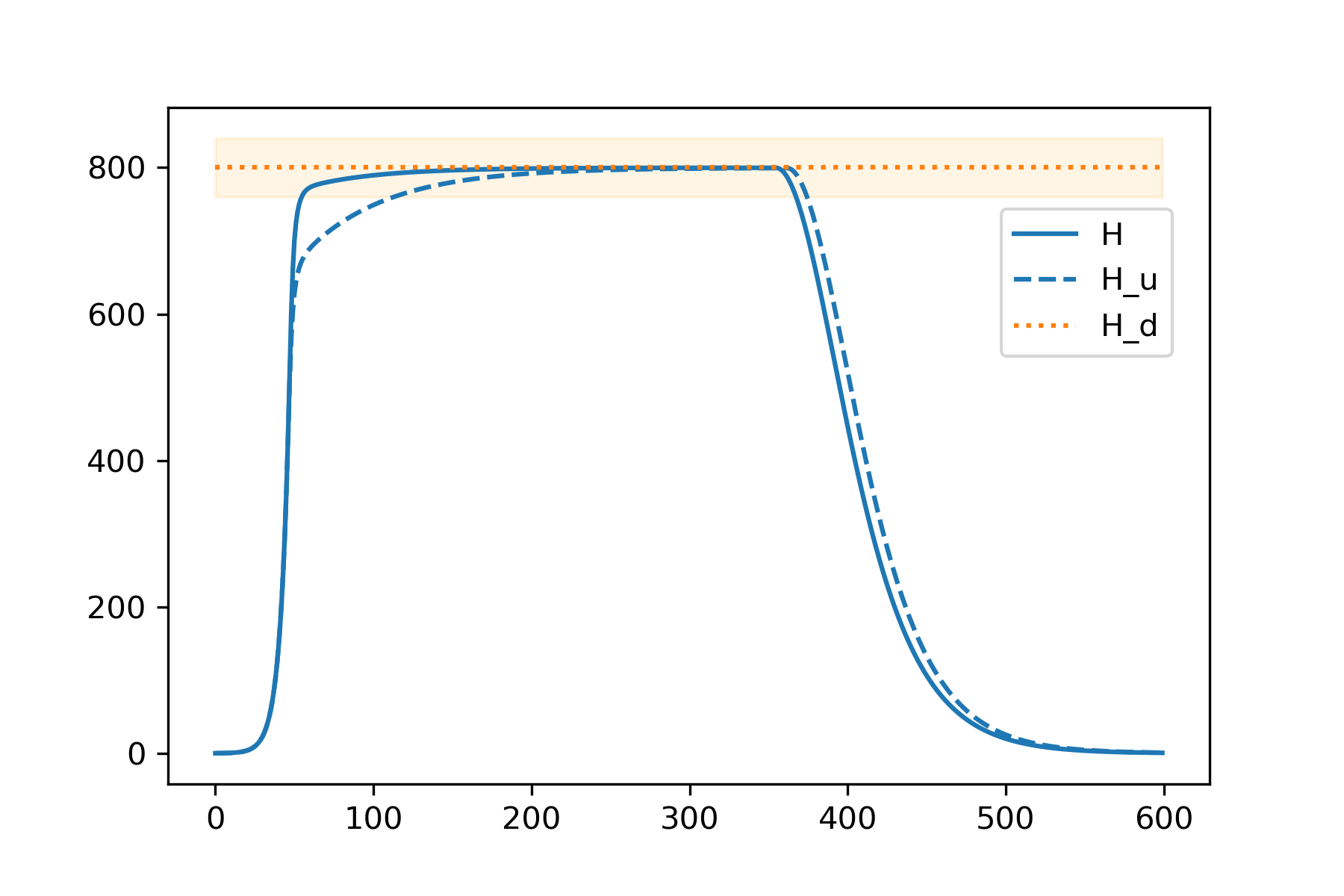}}
\:
\subfigure[$\{SD,SD\_u\}(\text{\%}) \times t(\text{days})$]{\includegraphics[width=0.48\linewidth, height=!]{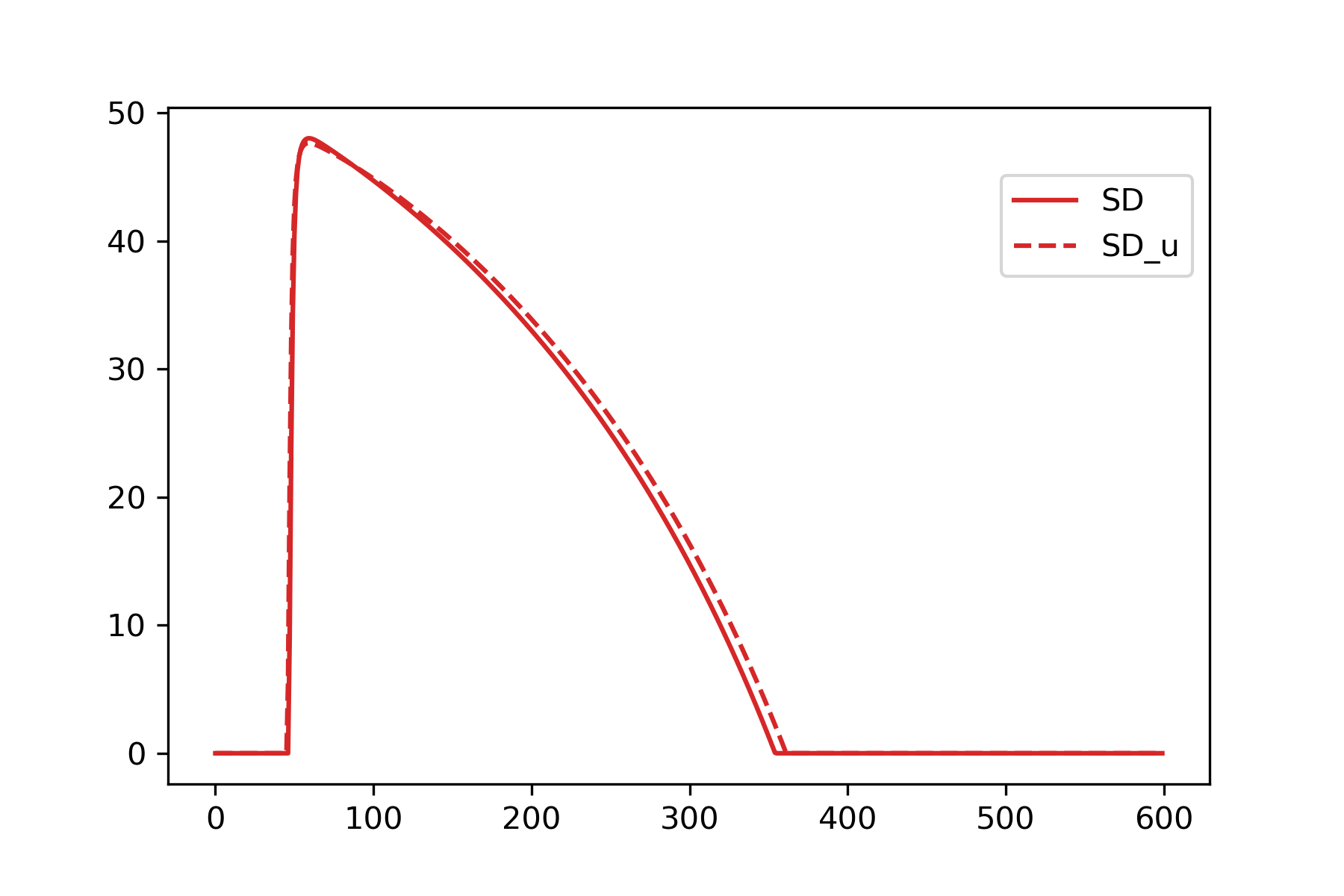}}
\caption{Simulation of standard SIR model with the proposed controller, where blue line is the number of the hospitalized individuals, dashed blue line is the number of the hospitalized individuals when there are uncertainties in the parameters, orange dotted line is the desired number of hospitalized individuals, the light orange fill indicates the range of $\pm 5\%$ of the $H\_d$ value, red line is the percent of the social distancing required and dashed red line is the percent of the social distancing required when there are uncertainties in the parameters}\label{fg:502}
\end{figure}

The simulation of Figure \ref{fg:503} shows the behavior of the infected individuals that will need health care when the proposed controller using the estimator is applied to adjust the social distancing level. In this simulation, it is possible to see that behavior of the number of hospitalized individuals, when does not have uncertainties, is similar to behavior of the case using the measurements. However, when there are uncertainties in the parameter $\beta_0$, the error $e(t)$ will keeps inside the range of the desired number of hospitalized individuals. This error occurs because the estimator need to know $\beta_0$ value. The uncertainties considered was +5\% and -20\% in the values of $\beta_{0}$ and $\gamma$, respectively.    
\begin{figure}[ht]
\center
\subfigure[$\{H^{*},H^{*}\_u,H\_d\}(\text{individuals})\times t(\text{days})$]{\includegraphics[width=0.48\linewidth, height=!]{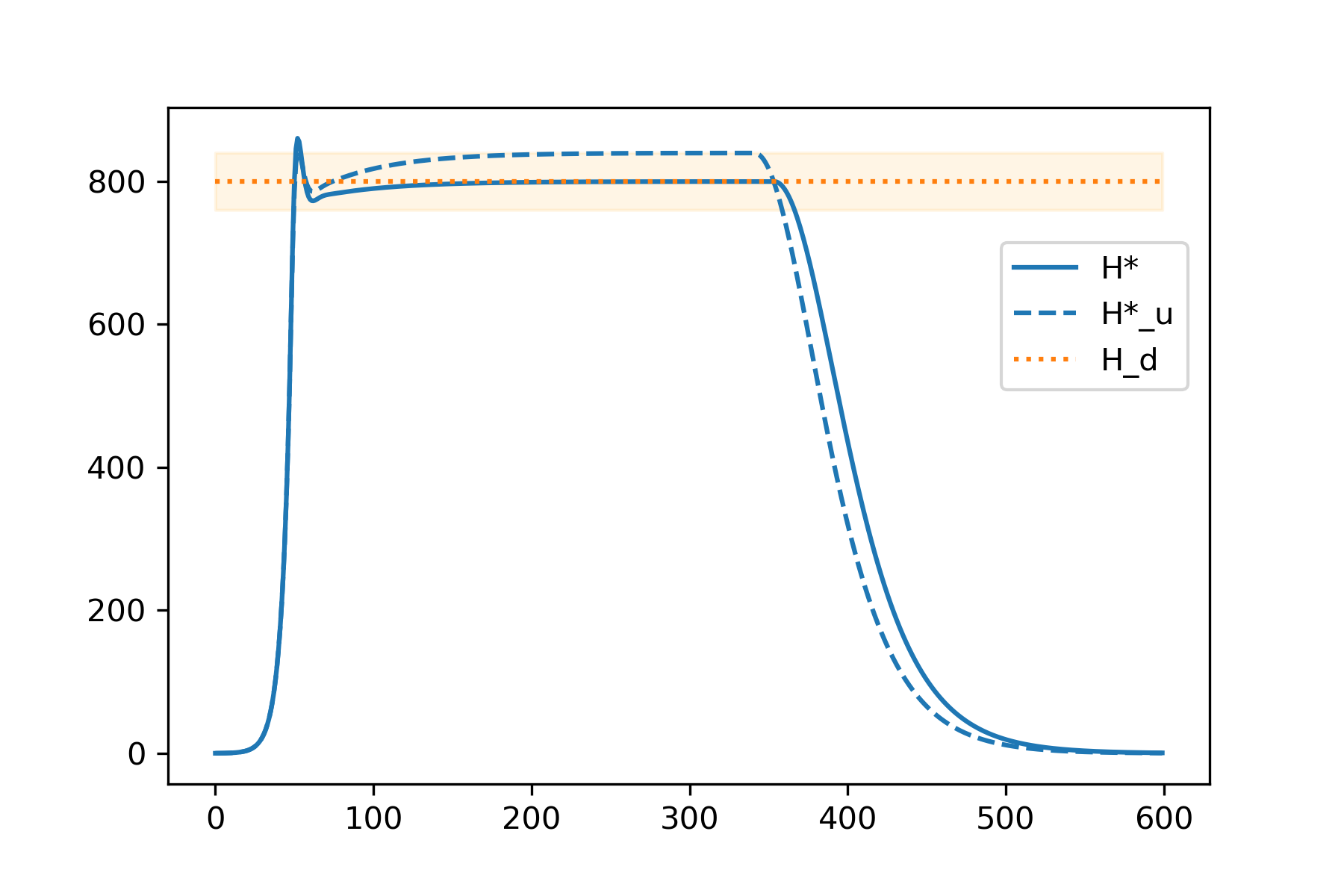}}
\:
\subfigure[$\{SD^{*},SD^{*}\_u\}(\text{\%})\times t(\text{days})$]{\includegraphics[width=0.48\linewidth, height=!]{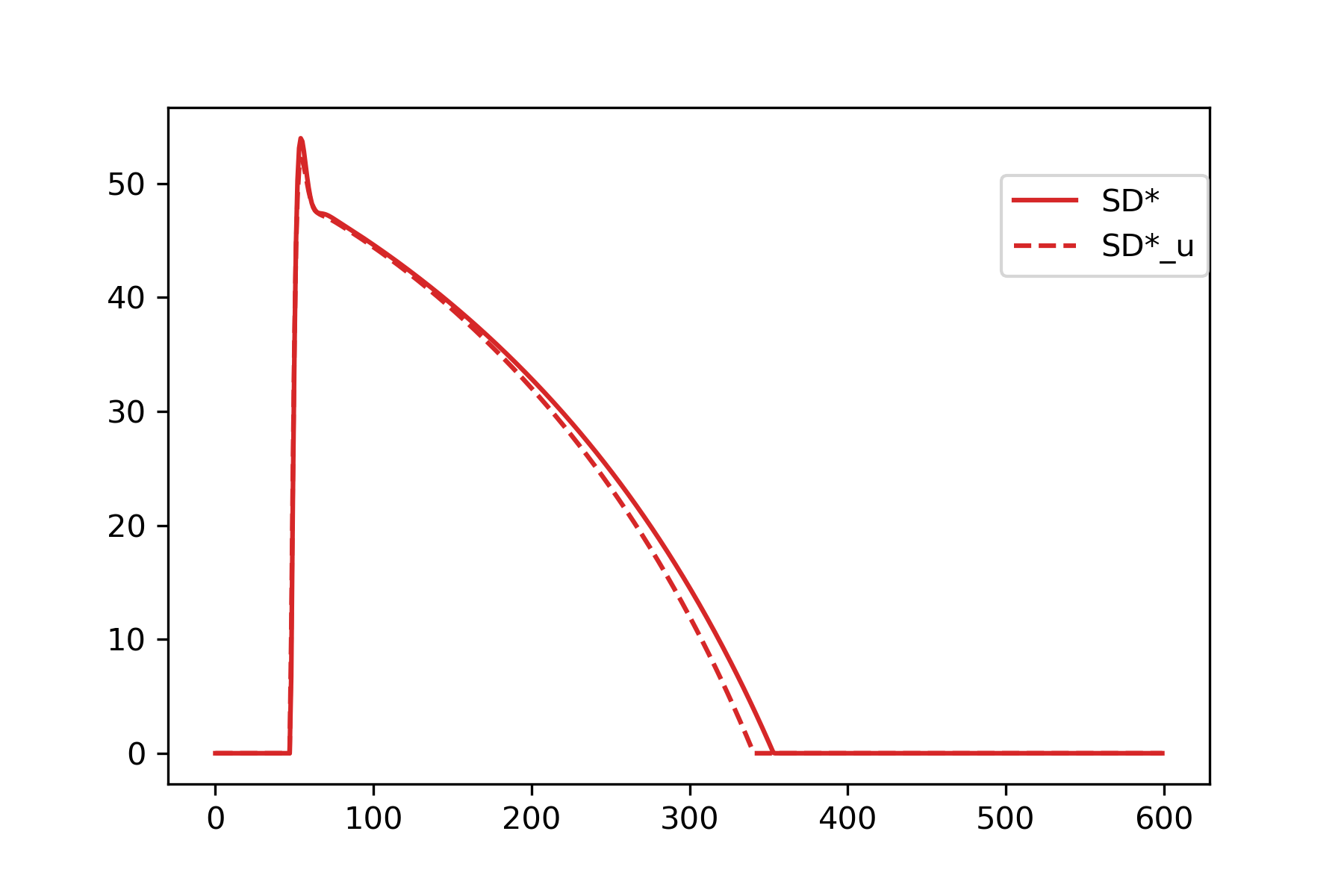}}
\caption{Simulation of standard SIR model with the proposed controller using the estimator, where blue line is the number of the hospitalized individuals, dashed blue line is the number of the hospitalized individuals when there are uncertainties in the parameters, orange dotted line is the desired number of hospitalized individuals, the light orange fill indicates the range of $\pm 5\%$ of the $H\_d$ value, red line is the percent of the social distancing required and dashed red line is the percent of the social distancing required when there are uncertainties in the parameters}\label{fg:503}
\end{figure}

\subsection{Discussion}

The proposed controller calculates the social distancing level to keep the COVID-19 outbreak controlled. The social distancing level is adjusted to guarantee the fastest way to finish the outbreak with the number of hospitalized individuals below the desired value. This technique can reduce the problems of social distancing and keeps the health care system working.

The value of the gains of the proposed controller is chosen by the designer and has the following logic: $\psi_1$ is related to how fast the error ($e(t)$) goes to nearby of zero and $\psi_2$ is related to how softly the error ($e(t)$) converges to zero. 

To get a better result when the estimate of the $i(t)$ is used, it will be necessary to improve the quality of the estimator or a better value of the $\beta_0$.

The necessary time to finish the social distancing depends on the capacity of the health care system to attend the infected individuals.  

\section{Conclusions}

COVID-19, a contact-transmissible infectious disease, is thought to spread through a population via direct contact between individuals \cite{Prem_2020}. Models that assess the effectiveness of physical distancing interventions, such as school closure, need to account for social structures and heterogeneities in the mixing of individuals \cite{Prem_2020,Hilton_2019}.

This work proposed a simple control law to keep the infected individuals during the COVID-19 outbreak below of the desired number. The proposed control law keeps the value of infected individuals controlled only adjusting the social distancing level. The analysis of the stability of the proposed controller was done. 

In this work, simulation results are presented to show as the control law works. In all simulations, the proposed control law reaches its objectives. It was done simulations of the proposed controller to the standard SIR model. The proposed controller was robust to uncertainties in the basic reproductive number $R_{0}$ and disease transmission rate $\gamma$. A simulation considering that only the accumulative number of infected individuals is measured was done and the proposed controller kept the number of infected individuals below the upper limit of the desired number of infected individuals.    

A version of the proposed controller to daily update was developed. This version is too simple that can be calculated in a spreadsheet. This can be considered a good contribution because can easily spread a way to control the COVID-19 outbreak or help in the analysis. 

To conclude, this work proposes solutions for one of the main research challenges of the analysis and control of epidemics. It proposed a controller that considers uncertainties in the parameters.

To the future, a group-structure SIR model will be proposed and the proposed controller will be applied to it. This will be done to develop a decentralized solution and to proposes a more general epidemic model.

\bibliographystyle{unsrt}
\bibliography{references}  

\begin{thebibliography}{10}

\bibitem{Jiang_2018}
Jian Jiang and Tianshou Zhou.
\newblock Resource control of epidemic spreading through a multilayer network.
\newblock {\em Scientific Reports}, 8(1):1629, 2018.

\bibitem{Verity_2020}
Robert Verity, Lucy~C Okell, Ilaria Dorigatti, Peter Winskill, Charles
  Whittaker, Natsuko Imai, Gina Cuomo-Dannenburg, Hayley Thompson, Patrick G~T
  Walker, Han Fu, Amy Dighe, Jamie~T Griffin, Marc Baguelin, Sangeeta Bhatia,
  Adhiratha Boonyasiri, Anne Cori, Zulma Cucunubá, Rich FitzJohn, Katy
  Gaythorpe, Will Green, Arran Hamlet, Wes Hinsley, Daniel Laydon, Gemma
  Nedjati-Gilani, Steven Riley, Sabine [van Elsland], Erik Volz, Haowei Wang,
  Yuanrong Wang, Xiaoyue Xi, Christl~A Donnelly, Azra~C Ghani, and Neil~M
  Ferguson.
\newblock Estimates of the severity of coronavirus disease 2019: a model-based
  analysis.
\newblock {\em The Lancet Infectious Diseases}, 2020.

\bibitem{martcheva_2015}
M.~Martcheva.
\newblock {\em An Introduction to Mathematical Epidemiology}.
\newblock Texts in Applied Mathematics. Springer US, 2015.

\bibitem{Ball_2008}
Frank~G. Ball, Edward~S. Knock, and Philip~D. O’Neill.
\newblock Control of emerging infectious diseases using responsive imperfect
  vaccination and isolation.
\newblock {\em Mathematical Biosciences}, 216(1):100 -- 113, 2008.

\bibitem{Gaff_2009}
Holly Gaff.
\newblock Optimal control applied to vaccination and treatment strategies for
  various epidemiological models, 2009.

\bibitem{Zaman_2009}
Gul Zaman, Yong~Han Kang, and Il~Hyo Jung.
\newblock Optimal treatment of an sir epidemic model with time delay.
\newblock {\em Biosystems}, 98(1):43 -- 50, 2009.

\bibitem{Prem_2020}
Kiesha Prem, Yang Liu, Timothy~W Russell, Adam~J Kucharski, Rosalind~M Eggo,
  Nicholas Davies, Stefan Flasche, Samuel Clifford, Carl A~B Pearson, James~D
  Munday, Sam Abbott, Hamish Gibbs, Alicia Rosello, Billy~J Quilty, Thibaut
  Jombart, Fiona Sun, Charlie Diamond, Amy Gimma, Kevin [van Zandvoort],
  Sebastian Funk, Christopher~I Jarvis, W~John Edmunds, Nikos~I Bosse, Joel
  Hellewell, Mark Jit, and Petra Klepac.
\newblock The effect of control strategies to reduce social mixing on outcomes
  of the covid-19 epidemic in wuhan, china: a modelling study.
\newblock {\em The Lancet Public Health}, 2020.

\bibitem{Hellewell_2020}
Joel Hellewell, Sam Abbott, Amy Gimma, Nikos~I Bosse, Christopher~I Jarvis,
  Timothy~W Russell, James~D Munday, Adam~J Kucharski, W~John Edmunds, Fiona
  Sun, Stefan Flasche, Billy~J Quilty, Nicholas Davies, Yang Liu, Samuel
  Clifford, Petra Klepac, Mark Jit, Charlie Diamond, Hamish Gibbs, Kevin [van
  Zandvoort], Sebastian Funk, and Rosalind~M Eggo.
\newblock Feasibility of controlling covid-19 outbreaks by isolation of cases
  and contacts.
\newblock {\em The Lancet Global Health}, 8(4):e488 -- e496, 2020.

\bibitem{Nowzari_2016}
C.~{Nowzari}, V.~M. {Preciado}, and G.~J. {Pappas}.
\newblock Analysis and control of epidemics: A survey of spreading processes on
  complex networks.
\newblock {\em IEEE Control Systems Magazine}, 36(1):26--46, 2016.

\bibitem{Kermack_1927}
William~Ogilvy Kermack, A.~G. McKendrick, and Gilbert~Thomas Walker.
\newblock A contribution to the mathematical theory of epidemics.
\newblock {\em Proceedings of the Royal Society of London. Series A, Containing
  Papers of a Mathematical and Physical Character}, 115(772):700--721, 1927.

\bibitem{NEJMoa_2020}
Qun Li, Xuhua Guan, Peng Wu, Xiaoye Wang, Lei Zhou, Yeqing Tong, Ruiqi Ren,
  Kathy~S.M. Leung, Eric~H.Y. Lau, Jessica~Y. Wong, Xuesen Xing, Nijuan Xiang,
  Yang Wu, Chao Li, Qi~Chen, Dan Li, Tian Liu, Jing Zhao, Man Liu, Wenxiao Tu,
  Chuding Chen, Lianmei Jin, Rui Yang, Qi~Wang, Suhua Zhou, Rui Wang, Hui Liu,
  Yinbo Luo, Yuan Liu, Ge~Shao, Huan Li, Zhongfa Tao, Yang Yang, Zhiqiang Deng,
  Boxi Liu, Zhitao Ma, Yanping Zhang, Guoqing Shi, Tommy~T.Y. Lam, Joseph~T.
  Wu, George~F. Gao, Benjamin~J. Cowling, Bo~Yang, Gabriel~M. Leung, and Zijian
  Feng.
\newblock Early transmission dynamics in wuhan, china, of novel
  coronavirus–infected pneumonia.
\newblock {\em New England Journal of Medicine}, 382(13):1199--1207, 2020.

\bibitem{Kucharski_2020}
Adam~J Kucharski, Timothy~W Russell, Charlie Diamond, Yang Liu, John Edmunds,
  Sebastian Funk, Rosalind~M Eggo, Fiona Sun, Mark Jit, James~D Munday,
  Nicholas Davies, Amy Gimma, Kevin [van Zandvoort], Hamish Gibbs, Joel
  Hellewell, Christopher~I Jarvis, Sam Clifford, Billy~J Quilty, Nikos~I Bosse,
  Sam Abbott, Petra Klepac, and Stefan Flasche.
\newblock Early dynamics of transmission and control of covid-19: a
  mathematical modelling study.
\newblock {\em The Lancet Infectious Diseases}, 20(5):553 -- 558, 2020.

\bibitem{Hilton_2019}
Joe Hilton and Matt~J. Keeling.
\newblock Incorporating household structure and demography into models of
  endemic disease.
\newblock {\em Journal of The Royal Society Interface}, 16(157):20190317, 2019.

\end{thebibliography}

\end{document}